\newtheorem{proposition}{Proposition}
\theoremstyle{nonumberplain}
\newtheorem{proof}{Proof}
\begin{document}
\title{Asymptotically Optimal One-Bit Quantizer Design for Weak-signal Detection in Generalized Gaussian Noise and Lossy Binary Communication Channel}

\author{Guanyu Wang, Jiang~Zhu and Zhiwei~Xu
}
\maketitle
\begin{abstract}
In this paper, quantizer design for weak-signal detection under arbitrary binary channel in generalized Gaussian noise is studied. Since the performances of the generalized likelihood ratio test (GLRT) and Rao test are asymptotically characterized by the noncentral chi-squared probability density function (PDF), the threshold design problem can be formulated as a noncentrality parameter maximization problem. The theoretical property of the noncentrality parameter with respect to the threshold is investigated, and the optimal threshold is shown to be found in polynomial time with appropriate numerical algorithm and proper initializations. In certain cases, the optimal threshold is proved to be zero. Finally, numerical experiments are conducted to substantiate the theoretical analysis.
\end{abstract}
{\bf Keywords:} Threshold optimization, weak-signal detection, quantization, generalized Gaussian noise

\section{Introduction}
Signal estimation and detection from quantized data continues to attract attention over the past years \cite{Poor1, Li1, Ribeiro3, Pan1, Ciuonzo2, Jiang1, Jiang2, Sani, Jiang3,Jiang4, Farias1,Farias2, Ciuonzo1,ZJF1,ZJF2,ZJF3}. In \cite{Poor1}, a general result is developed and applied to obtain specific asymptotic expressions for the performance loss under uniform data quantization in several signal detection and estimation problems including minimum mean-squared error (MMSE) estimation, non-random point estimation, and binary signal detection. In \cite{Li1}, a distributed adaptive quantization scheme is proposed for signal estimation, where individual sensor nodes dynamically adjusts their quantizer threshold based on earlier transmissions from other sensor nodes. In \cite{Ribeiro3}, distributed parameter estimators based on binary observations along with their error-variance performance are derived in the case of an unknown noise probability density function (PDF). For the robust estimation of a location parameter, the noise benefits to maximum likelihood type estimators are investigated \cite{Pan1}. As a result, the analysis of stochastic resonance effects is extended for noise-enhanced signal and information processing. In \cite{Ciuonzo2}, distributed detection of a non-cooperative target is tackled, and fusion rules are developed based on the locally-optimum detection framework. Recently, some variants of the classical signal estimation and detection model from quantized data are studied. One is that the unquantized observations are corrupted by combined multiplicative and additive Gaussian noise \cite{Jiang1,Jiang2,Sani}. Another is called the unlabeled sensing where the unknown order of the quantized measurements causes the entanglement of desired parameter and nuisance permutation matrix \cite{Jiang3, Jiang4}. In \cite{Farias1,Farias2}, the authors investigate the estimation problem under generalized Gaussian noise (GGN) and reveal the property of the Fisher information (FI). In addition, a systematic framework for composite hypothesis testing from independent Bernoulli samples is studied, and the comparison of detectors are made under one-sided and two-sided assumptions \cite{Ciuonzo1}.

The threshold of the quantizer can be designed to improve the performance of estimation and detection \cite{Kassam1, Warren1, Willett3, Chen1,Venkitasubramaniam1, Junfang1,Rousseau1, Ciuonzo3, Ciuonzo4, Ciuonzo5}. In the early paper \cite{Kassam1}, two useful detection criteria are proposed, leading to the MMSE between the quantized output and the locally optimum nonlinear transform for each data sample. Later in \cite{Warren1}, the optimal quantized  detection problem is considered for the Neyman-Pearson, Bayes, Ali-Silvey distance, and mutual (Shannon) information criteria, and it is shown that the optimal sensor decision rules quantize the likelihood ratio of the observations. In the design of quantized detection systems, the optimal test is shown to employ a nonrandomized rule under certain conditions, which considerably simplifies the design \cite{Willett3}. In \cite{Chen1}, it is shown that given a particular constraint on the fusion rule, the optimal local decisions which minimize the error probability amount to a likelihood-ratio test (LRT). In addition, a design example with a binary symmetric channel (BSC) is given to illustrate the usefulness of the result in obtaining optimal threshold for local sensor observations. In \cite{Venkitasubramaniam1}, the maximin asymptotic relative efficiency (ARE) criterion is proposed to optimize the thresholds, and the improvement of estimation performance is demonstrated in distributed systems. Utilizing the asymptotic performance of the one-bit generalized likelihood ratio test (GLRT) detector, the optimal threshold is proven to be zero under Gaussian noise and a BSC \cite{Junfang1}. The quantizer design is also analyzed under the GGN \cite{Rousseau1,Ciuonzo3}. In \cite{Rousseau1}, the problem is  considered under the error-free channel, and the optimal threshold is only plotted without theoretical justification. The BSC is also included in the successional studies \cite{Ciuonzo3,Ciuonzo4,Ciuonzo5}. In \cite{Ciuonzo3}, zero is shown to be the optimal threshold when the shape coefficient is less than or equal to two and a good (sub-optimal) choice when the shape coefficient is larger than two. Analogously, zero is employed as a good choice in the generalized Rao test \cite{Ciuonzo4}. For generalized locally optimum detectors, the threshold optimization is re-formulated as a maximization problem in terms of the local false-alarm probability, which can be easily evaluated via one-dimension numerical search \cite{Ciuonzo5}.

\subsection{Related Work and Main Contributions}
The most related work to ours is \cite{Junfang1, Rousseau1, Ciuonzo3}. Compared to \cite{Junfang1} focusing on Gaussian noise only in the BSC setting, we study the threshold optimization problem under GGN and arbitrary binary channel. In \cite{Rousseau1}, the authors present the optimal threshold without theoretical proof, and they do not take the binary channel into account. In \cite{Ciuonzo3}, it states that choosing zero threshold is a suboptimal choice (not too bad). In this paper, we extend their work to more general settings. It should be noticed that compared to the wide use of Gaussian noise assumption, the GGN assumption is usually made for infrequent but high level events, e.g., extremely low frequency electromagnetic noise due to thenderstorms or under ice acoustic noise due to iceberg break. In these events, the GGN assumption models the noise spikes more accurately than the Gaussian one and thus leads to better detection performance \cite[p.381]{Kay2}.

The main contribution of this paper is to address the threshold design problem under GGN in the arbitrary binary channel. Under the weak-signal assumption, the thresholds can be optimized via maximizing the noncentrality parameter. Unfortunately, it is difficult to prove the  theoretical properties of the noncentrality parameter function with respect to the threshold. We novelly propose a simplified function whose sign is the same as that of the first derivative of the noncentrality parameter function. Consequently, we rigorously prove the theoretical properties of the  noncentrality parameter function with respect to the threshold indirectly. Then we prove that for arbitrary binary channel, the optimal threshold can be found in polynomial time via appropriate numerical algorithm with proper initializations. In certain cases, we prove the optimal threshold to be zero.
\subsection{Organization}
The paper is organized as follows. In section \ref{section_problem}, the weak-signal detection problem is described, and preliminary materials including both maximum likelihood (ML) estimation and parameter tests are introduced. Section \ref{Q_design} states the main results of the quantizer design. In addition, an algorithm to calculate the optimal threshold is proposed. In section \ref{simulation}, numerical experiments are conducted to substantiate the theoretical analysis. The conclusions are presented in section \ref{con}. Finally, the related functions and the proof of propositions are presented in \ref{appendix}.
\section{Problem Setup}\label{section_problem}
In this section, the weak-signal detection problem from binary samples is described. In addition, the ML estimation, GLRT and Rao test are presented.

Consider a binary hypothesis testing problem, in which $N$ distributed sensors in a wireless sensor network (WSN) are utilized $K$ times to generate noisy observations. Those observations are quantized with different thresholds, and then used to detect the presence of an unknown deterministic weak signal with amplitude $\theta$. The quantized samples under both hypotheses are
\begin{equation}\label{hypothesis_testing}
\begin{aligned}
\begin{cases}
&{\mathcal H}_0:b_{ij}={\rm 1}\{w_{ij}\geq \tau_{ij}\},\\
&{\mathcal H}_1:b_{ij}={\rm 1}\{h_{ij}{\theta}+w_{ij}\geq \tau_{ij}\},
\end{cases}
\end{aligned}
\end{equation}
where $i=1,\cdots,N$ denotes the sensor number, $j=1,\cdots,K$ denotes the observation time, $h_{ij}$ is a spatial-temporal signal, $w_{ij}$ is the independent and identically distributed (i.i.d) noise, $\tau_{ij}$ is the threshold of the $i$ th sensor at the $j$ th observation time, and ${\rm 1}\{\cdot\}$ is an indicator function which produces 1 if the argument is true and 0 otherwise. We assume that $\theta\in [-\Delta,\Delta]$ for technical reasons \cite{Papadopoulos1}, where $\Delta$ is a known constant.

Between the quantized data and the fusion center (FC) during the transmission, $b_{ij}$ are flipped to $u_{ij}$ before being received \cite{Ozdemir1}. Let $(q_0,q_1)$ denote the flipping probabilities such that
\begin{equation}\label{binary_channel}
\begin{aligned}
&{\rm Pr}(u_{ij}=1|b_{ij}=0)=q_0,\\
&{\rm Pr}(u_{ij}=0|b_{ij}=1)=q_1,
\end{aligned}
\end{equation}
which will be used to calculate the probability mass function of $u_{ij}$ later.

In this paper, we focus on the asymptotically optimal quantizer design in the case of arbitrary binary channel and the GGN $w$, whose cumulative distribution function (CDF) is $F(w)$ and PDF is
\begin{align}\label{noise_pdf}
f(w)=\frac{\alpha\beta}{2\Gamma(1/\beta)}e^{-(\alpha|w|)^\beta},
\end{align}
where $\Gamma(\cdot), \alpha^{-1}>0, \beta>0$ denote the gamma function, its scale parameter and shape parameter. Note that the GGN can describe several common PDFs such as Laplace distribution ($\beta=1$), Gaussian distribution ($\beta=2$) and uniform distribution ($\beta=\infty$).
\subsection{Maximum Likelihood Estimation}\label{mle_pre}
Under hypothesis $\mathcal H_1$, the PMF of $b_{ij}$ derived from (\ref{hypothesis_testing}) is
\begin{equation}\label{y_probability}
\begin{aligned}
&{\rm Pr}(b_{ij}=1|{\mathcal H_1})=F(h_{ij}\theta-\tau_{ij}),\\
&{\rm Pr}(b_{ij}=0|{\mathcal H_1})=1-F(h_{ij}\theta-\tau_{ij}),
\end{aligned}
\end{equation}
where $F(\cdot)$ denotes the CDF of the GGN $w$. The binary data $b_{ij}$ are transmitted to the FC through the binary channel (\ref{binary_channel}). As a consequence, the PMF of $u_{ij}$ under hypothesis $\mathcal H_1$ can be formulated as
\begin{equation}\label{y_probability}
\begin{aligned}
{\rm Pr}(u_{ij}=1|{\mathcal H_1})&=q_0+(1-q_0-q_1)F\left({h_{ij}{\theta}-\tau_{ij}}\right)\triangleq p_{ij},\\
{\rm Pr}(u_{ij}=0|{\mathcal H_1})&=1-p_{ij}.\\
\end{aligned}
\end{equation}
Let $\mathbf U$ be the matrix satisfying $[\mathbf U]_{ij}=u_{ij}$. The PMF of $\mathbf U$ under hypothesis $\mathcal H_1$ is
\begin{align}\label{LK}
p(\mathbf U;{\theta}|{\mathcal H_1})=\prod_{i=1}^{N} \prod_{j=1}^{K} {\rm Pr}(u_{ij}=1|{\mathcal H_1})^{u_{ij}} {\rm Pr}(u_{ij}=0|{\mathcal H_1})^{(1-u_{ij})},
\end{align}
and the corresponding log-likelihood function $l(\mathbf U;{\theta})\triangleq l(\mathbf U;{\theta}|{\mathcal H_1})$ is
\begin{align}\label{log_likelihood_function}
l(\mathbf U;{\theta})=\sum_{i=1}^{N}\sum_{j=1}^{K} (u_{ij} \log p_{ij}+(1-u_{ij})\log(1-p_{ij})).
\end{align}
Similarly, the log-likelihood under hypothesis $\mathcal H_0$ is $l(\mathbf U|{\mathcal H_0})=l(\mathbf U;0)$.
\subsection{Parameter Tests}
\subsubsection{GLRT}
In the case of known $\theta$, the optimal detector according to the NP criterion is the log-likelihood ratio test \cite[p. 65, Theorem 3.1]{Kay1}. For unknown $\theta$, the GLRT is usually adopted. Although there is no optimality associated with the GLRT, it appears to work well in many scenarios of practical interest \cite[p. 200]{Kay2}. The GLRT replaces the unknown parameter by its ML estimation and decides ${\mathcal H}_1$ if
\begin{equation}\label{GLRT_Detector}
\begin{aligned}
T_G(\mathbf U)=\underset{\theta\in[-\Delta,\Delta]} {\operatorname{max}}~ l({\mathbf U};{\theta})-{ l({\mathbf U};0)}>\gamma,
\end{aligned}
\end{equation}
where $\gamma$ is a threshold determined by the given false alarm probability $P_{FA}$.
\subsubsection{Rao Test}
Since the Rao test does not require an ML estimate evaluation, it is easier to compute in practice \cite[p. 187]{Kay2}. The Rao test decides ${\mathcal H}_1$ if
\begin{equation}\label{GLRT_Detector}
\begin{aligned}
T_R(\mathbf U)=\left(\frac{\partial l({\mathbf U};\theta)}{\partial\theta}\bigg|_{\theta=0}\right)^2 I^{-1}(0)>\gamma,
\end{aligned}
\end{equation}
where $I(0)$ is the FI $I(\theta)$ evaluated at $\theta=0$, and the concrete expression of $I(\theta)$ is presented later in equation (\ref{FI}).
\section{Quantizer Design }\label{Q_design}
In this section, the threshold optimization problem of maximizing the noncentrality parameter is formulated, and the theoretical property of the noncentrality parameter function is revealed. In addition, a gradient descent algorithm is proposed to find the optimal thresholds.

The detection performance of GLRT $T_G(\mathbf U)$ or Rao test $T_R(\mathbf U)$ is difficult to analyze. Fortunately, an approximation can be utilized and it reveals that as $NK\to\infty$, the asymptotic performance of $2T_G(\mathbf U)$ and $2T_R(\mathbf U)$ is \cite[pp. 188-189]{Kay2}
\begin{align}\label{Chi_Square}
2T_G(\mathbf U),2T_R(\mathbf U) \sim
\begin{cases}
&{\mathcal H}_0: \quad \chi_1^2   \\
&{\mathcal H}_1:\quad \chi_1'^{2}(\lambda_Q),
\end{cases}
\end{align}
where $\chi_n^2$ denotes a central chi-squared PDF with $n$ degrees of freedom, and $\chi_n'^{2}(\lambda_Q)$ denotes a noncentral chi-squared PDF with $n$ degrees of freedom and noncentral parameter $\lambda_Q$. In our problem, $\lambda_Q$ is
\begin{equation}\label{lambda_Q}
\begin{aligned}
\lambda_Q = {\theta}^2 I(\theta),
\end{aligned}
\end{equation}
where $I(\theta)$ denotes the FI \cite{Junfang1}. The FI $I(\theta)$ is the expectation of the second derivative of the negative log-likelihood function $l(\mathbf U;{\theta})$ (\ref{log_likelihood_function}) taken w.r.t. $\theta$, i.e.,
\begin{align}
&I({\theta})=(q_0+q_1-1)\sum_{i=1}^{N}\sum_{j=1}^{K}h_{ij}\Bigg[{\rm E}_{\mathbf U}\left(\frac{\partial}{\partial \theta} \left(\frac{u_{ij}}{p_{ij}}-\frac{1-u_{ij}}{1-p_{ij}}\right)\right)\notag\\
&\times f\left(h_{ij}\theta-\tau_{ij}\right)+\frac{\partial}{\partial \theta}f\left({h_{ij}\theta-\tau_{ij}}\right) {\rm E}_{\mathbf U}\left(\frac{u_{ij}}{p_{ij}}-\frac{1-u_{ij}}{1-p_{ij}}\right)\Bigg]\notag\\
&={(1-q_0-q_1)^2}\sum_{i=1}^{N}\sum_{j=1}^{K}\frac{ h_{ij}^2 f^2\left({h_{ij}{\theta}-\tau_{ij}}\right)} {p_{ij}(1-p_{ij})},\label{FI}
\end{align}
where (\ref{FI}) follows due to ${\rm E}_{\mathbf U}[u_{ij}/p_{ij}-(1-u_{ij})/(1-p_{ij})]=0$ and the PMF of $\mathbf U$  (\ref{y_probability}). Under the weak-signal assumption, the unknown scaling $\theta$ takes values near $0$ (actually, we assume that  $|\theta|=c/\sqrt{NK}$ for some constant $c>0$), and we have \begin{align}\label{lambda_app}
\lambda_Q \approx{\theta}^2 I(0)
\end{align}
as $NK\to\infty$ \cite[p. 232]{Kay2}. From (\ref{FI}), we have
\begin{align}
I(0)=(1-q_0-q_1)^2\sum_{i=1}^{N} \sum_{j=1}^{K}h_{ij}^2 G(-\tau_{ij}),\label{FI0}
\end{align}
where
\begin{equation}\label{Gx}
\begin{aligned}
G(x)\triangleq G(x,q_0,q_1)=\frac{f^2(x)}{\frac{1}{4}-\left[(1-q_0-q_1)F(x)-\frac{1}{2}+q_0\right]^2}.
\end{aligned}
\end{equation}
Asymptotically, the noncentrality parameter $\lambda_Q$ determines the detection performance \cite{Junfang1}. Therefore, maximizing the  noncentrality parameter $\lambda_Q$ (\ref{lambda_app}) with respect to $\tau_{ij}$ can be decomposed into a set of independent quantization threshold design problems
\begin{align}\label{optmal_tau}
\tau_{ij}^*=\underset{\tau_{ij}}{\operatorname{argmax}}~ h_{ij}^2G(-\tau_{ij})=\underset{\tau}{\operatorname{argmax}}~ G(-\tau)\triangleq\tau^*.
\end{align}
Equation (\ref{optmal_tau}) demonstrates that the asymptotically optimal weak-signal detection performance can be achieved by utilizing the identical optimal thresholds $\tau^*$, irrespective of the shape of the  spatial-temporal signal, which is also shown in \cite{Junfang1}. The optimal threshold $\tau^*$ can be found via solving the problem
\begin{align}\label{optmal_x}
x^*=\underset{x}{\operatorname{argmax}}~ G(x),
\end{align}
and $\tau^*=-x^*$.

However, $G(x)$ can also be regarded as a function of parameters $q_0$, $q_1$ and $\beta$. Varying these parameters may result in different optimal value $x^*$, more intuitively, different shape of $G(x)$ \cite[Fig.1]{Ciuonzo1}. For better investigation of the theoretical property of $G(x)$, we partition the parameter values and discuss them separately. First, the binary asymmetric channel case which corresponds to $q_0\not=q_1$ is considered. In this setup, the monotonicity or quasiconcavity of $G(x)$ is studied under $0<\beta\leq 1$, $1<\beta\leq 2$ and $\beta>2$ respectively. Second, the deduction in the binary asymmetric channel case is extended to the simple BSC case corresponding to $q_0=q_1$. Finally, combining both cases, we proposed a numerical algorithm to efficiently calculate the optimal value $x^*$ (\ref{Gx}) for arbitrary binary channel. 
\subsection{Binary Asymmetric Channel}\label{BAC}
In this subsection, we focus on the binary asymmetric channel case, i.e., $q_0\not=q_1$. By observing the formula (\ref{Gx}), we realize that a swap of the values of $q_0$ and $q_1$ does not change the value of $G(x)$, namely, $q_0$ and $q_1$ contribute equally to the value of $G(x)$. Heuristically, $G(x)$ may be similar to symmetric functions which have some ``symmetry'' properties, from which Proposition \ref{prop_1} is derived.
\begin{proposition}\label{prop_1}
The maximum of $G(x)$ under arbitrary $q_0\not=q_1$ can be found via solving the problem by restricting $q_0>q_1$ and $1-q_0-q_1>0$, in which the optimal point $x^*\geq 0$.
\end{proposition}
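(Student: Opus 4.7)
The plan is to exploit two elementary symmetries of the function $G(x;q_0,q_1)$ defined in (\ref{Gx}) in order to collapse the parameter domain, and then to establish $x^{*}\geq 0$ in the resulting reduced regime by a direct sign comparison between $G(x)$ and $G(-x)$.

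First I would establish two invariances. Because the GGN PDF (\ref{noise_pdf}) is even, $f(-x)=f(x)$ and $F(-x)=1-F(x)$, and a short algebraic manipulation of the bracket inside the denominator of (\ref{Gx}) yields
\begin{align}
G(-x;q_0,q_1)=G(x;q_1,q_0),
\end{align}
while expanding the square gives
\begin{align}
G(x;q_0,q_1)=G(x;1-q_0,1-q_1).
\end{align}
The first identity is an $x$-reflection symmetry that interchanges $q_0$ and $q_1$; the second is a relabeling symmetry, corresponding to the substitution $u_{ij}\mapsto 1-u_{ij}$ in the channel (\ref{binary_channel}), which maps $q_0+q_1$ to $2-q_0-q_1$. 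By applying these transformations one at a time (and composing them when $q_0<q_1$ and $q_0+q_1>1$ occur simultaneously), the problem $\max_{x}G(x;q_0,q_1)$ for arbitrary $q_0\ne q_1$ is reduced, without changing the maximum value, to the same problem in the regime $q_0>q_1$ and $1-q_0-q_1>0$; the maximizer of the original problem is recovered from the maximizer of the reduced one by possibly flipping its sign.

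Second I would show that in this reduced regime every maximizer satisfies $x^{*}\geq 0$. Introduce $A_x\triangleq (1-q_0-q_1)F(x)-\tfrac12+q_0$, so that the denominator of $G(x)$ equals $\tfrac14-A_x^2$. Since $f^2$ is even, it suffices to prove $A_x^2>A_{-x}^2$ for every $x>0$, which gives $G(x)>G(-x)$ and therefore rules out any maximizer in $(-\infty,0)$. I would factor the difference as
\begin{align}
A_x^2-A_{-x}^2=(A_x+A_{-x})(A_x-A_{-x}),
\end{align}
and then compute
\begin{align}
A_x+A_{-x}=q_0-q_1,\qquad A_x-A_{-x}=(1-q_0-q_1)\bigl(2F(x)-1\bigr).
\end{align}
Under $q_0>q_1$, $1-q_0-q_1>0$, and $x>0$ (so $F(x)>1/2$ by the symmetry of the GGN), both factors are strictly positive, hence $G(x)>G(-x)$ on $(0,\infty)$ and therefore $x^{*}\geq 0$.

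The main obstacle I expect is the bookkeeping in the reduction step: one must verify that by composing the two symmetries in the appropriate order one can always land in the joint regime $q_0>q_1$ and $q_0+q_1<1$ (not merely in one of the two half-planes), and keep track of whether the original maximizer is $x^{*}$ or $-x^{*}$. Once that is in place, the inequality $G(x)>G(-x)$ is a one-line consequence of the factorization above, and together they yield the proposition.
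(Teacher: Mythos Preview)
Your proposal is correct and follows essentially the same route as the paper: the paper establishes exactly the two symmetries $G(x,q_0,q_1)=G(-x,q_1,q_0)=G(x,1-q_0,1-q_1)$ to reduce to $q_0>q_1,\ 1-q_0-q_1>0$, and then shows $G(x)>G(-x)$ for $x>0$ by the same factorization (your $(A_x+A_{-x})(A_x-A_{-x})=(q_0-q_1)(1-q_0-q_1)(2F(x)-1)$ is precisely the numerator factor the paper isolates). The only thing the paper makes explicit that you left implicit is that the denominator $\tfrac14-A_x^2$ is positive (equivalently $|A_x|<\tfrac12$), which is needed to convert $A_x^2>A_{-x}^2$ into $G(x)>G(-x)$; this is a one-line check from $0<F(x)<1$.
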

\begin{proof}
The proof is postponed to \ref{prop1}.
\end{proof}
According to Proposition \ref{prop_1}, the problem (\ref{optmal_x}) can be reduced without loss of generality. Hereinafter, we only have to investigate the property of $G(x)$ under $q_0>q_1$, $1-q_0-q_1>0$ and $x>0$. Since the property of a function depends a lot on the derivative, it is necessary to provide the derivative of $G(x)$ as
\begin{align}\label{G_derivative}
G'(x)=\frac{f^3(x)\left[F(x)+\frac{2q_0-1}{2(1-q_0-q_1)}+m_1(x)+m_2(x)\right]M(x)}{(1-q_0-q_1)^2m_1(x)\left[\frac{1}{4(1-q_0-q_1)^2}-\left(F(x)+\frac{2q_0-1}{2(1-q_0-q_1)}\right)^2\right]^2},
\end{align}
where
\begin{subequations}
\begin{align}
&M(x)=F(x)+\frac{2q_0-1}{2(1-q_0-q_1)}+m_1(x)-m_2(x),\label{M_def}\\
&m_1(x)=\frac{f(x)}{2\alpha^{\beta}\beta x^{\beta-1}},\label{rdef}\\
&m_2(x)=\sqrt{\frac{1}{4(1-q_0-q_1)^2}+m_1^2(x)}.\label{sdef}
\end{align}
\end{subequations}
Please notice that $m_1(x)$ and $m_2(x)$ are introduced for compact representation and will be repeatedly used in the following deduction. It is obvious that $m_2(x)>m_1(x)>0$ and $F(x)+\frac{2q_0-1}{2(1-q_0-q_1)}>\frac{q_0-q_1}{2(1-q_0-q_1)}>0$ (due to $x > 0$ and $F(x)>\frac{1}{2}$). Then all the components on the right side of (\ref{G_derivative}) are positive except for $M(x)$. In other words, the sign of $G'(x)$ is the same as that of $M(x)$.
In the following, we deduce the the property of $G(x)$ from $M(x)$, more explicitly, the monotonicity or quasiconcavity of $G(x)$ via $M(x)$ and the derivatives of $M(x)$.
\subsubsection{$0<\beta\leq1$ }\label{beta_01}
In this setup, we prove that $G(x)$ is monotonically decreasing in $(0,+\infty)$ which is equivalent to $M(x)<0$ for $x>0$. Here we present the derivative of $M(x)$ as
\begin{align}\label{M_derivative}
M'(x)&=f(x)\left[1+\left(1-\frac{m_1(x)}{m_2(x)}\right)m_3(x)\right],
\end{align}
where
\begin{align}\label{cdef}
m_3(x)&=\frac{1-\beta}{2\alpha^{\beta}\beta}x^{-\beta}-\frac{1}{2}.
\end{align}
From (\ref{rdef}), (\ref{sdef}) and (\ref{cdef}), one obtains that $0<\frac{m_1(x)}{m_2(x)}<1$ and $m_3(x)>-\frac{1}{2}$. Consequently, $M'(x)$ (\ref{M_derivative}) satisfies
\begin{align}\label{Mder}
M'(x)\geq f(x)\left[1-\frac{1}{2}\left(1-\frac{m_1(x)}{m_2(x)}\right)\right]>\frac{1}{2}f(x)>0.
\end{align}
Combined with $\underset{x\to+\infty}{\lim} M(x)=\frac{-q_1}{(1-q_0-q_1)}\leq 0$, one concludes that $M(x)<\underset{x\to+\infty}{\lim} M(x)\leq 0$, and $G(x)$ is monotonically decreasing in $(0,+\infty)$. As a result, The maximum of $G(x)$ is obtained at $x^*=0$.
\subsubsection{$1<\beta\leq 2$ }\label{beta_12}
In this setup, we prove that $G(x)$ is quasiconcave and has only one stationary point in $(0,+\infty)$. First we introduce a point $x_1$ which is useful for the following analysis.
\begin{proposition}\label{prop_4}
Let $x_1=\frac{1}{\alpha}\left(\frac{\beta-1}{\beta}\right)^{\frac{1}{\beta}}$ and $\beta>1$, one has $M'(x_1)>0$.
\end{proposition}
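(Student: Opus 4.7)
The plan is to exploit the fact that the specific choice of $x_1$ appears to be engineered precisely so that $m_3(x_1)$ takes a particularly clean value, which then makes the bracketed factor in $M'(x)$ easy to sign. Since $M'(x) = f(x)\bigl[1+(1-m_1(x)/m_2(x))\,m_3(x)\bigr]$ and $f(x)>0$ everywhere, the whole problem reduces to showing that the bracketed expression is strictly positive at $x=x_1$.

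The first step is to evaluate $m_3$ at $x_1$. From $x_1=\alpha^{-1}\bigl(\tfrac{\beta-1}{\beta}\bigr)^{1/\beta}$ one reads off
\begin{equation*}
x_1^{-\beta}=\frac{\alpha^{\beta}\beta}{\beta-1}.
\end{equation*}
Substituting into the definition $m_3(x)=\tfrac{1-\beta}{2\alpha^{\beta}\beta}x^{-\beta}-\tfrac{1}{2}$ gives, after the $\alpha^{\beta}\beta$ factors cancel, $m_3(x_1)=\tfrac{1-\beta}{2(\beta-1)}-\tfrac{1}{2}=-1$. So $x_1$ is precisely the point at which $m_3$ equals $-1$.

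The second step is to plug this value back into $M'(x_1)$:
\begin{equation*}
M'(x_1)=f(x_1)\left[1-\left(1-\frac{m_1(x_1)}{m_2(x_1)}\right)\right]=f(x_1)\,\frac{m_1(x_1)}{m_2(x_1)}.
\end{equation*}
The paper has already observed (immediately after the definitions of $m_1$ and $m_2$) that $m_2(x)>m_1(x)>0$, so $m_1(x_1)/m_2(x_1)\in(0,1)$. Since $f(x_1)>0$ as well, this yields $M'(x_1)>0$, concluding the proof.

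I do not expect any real obstacle here: the argument is essentially a single substitution followed by a sign check. The only thing to notice is the (nontrivial) algebraic coincidence that the particular $x_1$ singled out in the statement is exactly the root of $m_3(x)=-1$, which is what collapses the otherwise awkward factor $1+(1-m_1/m_2)m_3$ into the manifestly positive ratio $m_1/m_2$.
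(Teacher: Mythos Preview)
Your proof is correct and matches the paper's own argument essentially line for line: compute $m_3(x_1)=-1$, substitute into $M'(x)=f(x)[1+(1-m_1/m_2)m_3]$ to obtain $M'(x_1)=f(x_1)\,m_1(x_1)/m_2(x_1)$, and conclude positivity from $f>0$ and $0<m_1/m_2<1$.
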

\begin{proof}
The proof is postponed to \ref{prop4}.
\end{proof}
 Due to $\underset{x\to 0^+}{\lim}~m_1(x){x^{\beta-1}}=\frac{f(0)}{2\alpha^{\beta}\beta}$ and
\begin{align}\notag
1-\frac{m_1(x)}{m_2(x)}=\frac{1/(1-q_0-q_1)^2}{4 m_1(x)m_2^2(x)},
\end{align}
one has $\underset{x\to 0^+}{\lim} M'(x)=-\infty$. Combined with $M'(x_1)>0$ from Proposition \ref{prop_4}, one concludes that $M(x)$ has at least a stationary point $x_0\in(0,x_1)$ satisfying $M'(x_0)=0$. Utilizing $M'(x_0)=0$, from (\ref{M_derivative}) one has
\begin{align}\label{r_r1}
&\frac{m_1(x_0)}{m_2(x_0)}=1+\frac{1}{m_3(x_0)}>0.
\end{align}

Then we calculate the second derivative of $M(x)$ as
\begin{align}\label{M_2}
M''(x)&=\frac{f'(x)M'(x)}{f(x)}+f(x)\left(1-\frac{m_1(x)}{m_2(x)}\right)\times\notag\\
&\left[-\frac{\beta}{x}\left(m_3(x)+\frac{1}{2}\right)-\frac{f(x)c^2(x)}{m_2(x)}\left(1+\frac{m_1(x)}{m_2(x)}\right)\right],
\end{align}
where $f'(x)=-\alpha^{\beta}\beta x^{\beta-1}f(x)<0$ is the derivative of $f(x)$ $(x>0)$, and (\ref{M_2}) can be derived from (\ref{M_derivative}) via the basic product rule of differentiation formula $(uv)' = u'v +uv'$ and $m_1'(x)=f(x)m_3(x)$.
Substituting (\ref{rdef}), (\ref{r_r1}) and $M'(x_0)=0$ into (\ref{M_2}) yields
\begin{align}\label{M_2_x0}
&M''(x_0)=\frac{f(x_0)}{x_0}\left[\left(1+\frac{1}{m_3(x_0)}\right)(2-\beta)-\frac{\beta}{2m_3(x_0)}\right].
\end{align}
Given $1<\beta\leq2$, from (\ref{cdef}) one obtains $m_3(x)<-1/2$ for arbitrary $x\in(0+\infty)$. Therefore, $M''(x_0)>0$ due to $m_3(x_0)<0$ and $1+{1}/{m_3(x_0)}>0$ (\ref{r_r1}). Here we state the uniqueness of the stationary point $x_0$ in the following proposition.
\begin{proposition}\label{prop_5}
For a second-order differentiable univariate function $f(x) (a<x<b)$, and the stationary point $x_0$ such that $f'(x_0)=0$ satisfies $f''(x_0)>0$ ($f''(x_0)<0$), $f(x)$ is a quasiconvex (quasiconcave) function and $x_0$ is unique.
\end{proposition}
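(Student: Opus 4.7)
The plan is to handle the quasiconvex case (the quasiconcave case follows immediately by applying the result to $-f$). The approach has two steps: first, use a contradiction argument to show that a second stationary point cannot exist, and second, extract the monotonicity structure on each side of $x_0$, which delivers quasiconvexity.

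For uniqueness, I would suppose toward contradiction that two stationary points $x_0<x_0'$ exist in $(a,b)$. By hypothesis, each satisfies $f''>0$, so each is a strict local minimum. Since $f$ is continuous on the compact interval $[x_0,x_0']$, it attains its maximum there; I would argue that this maximum is attained at an interior point $x_m\in(x_0,x_0')$. Indeed, by the strict local-minimum property, $f(x)>f(x_0)$ for $x$ slightly greater than $x_0$, and $f(x)>f(x_0')$ for $x$ slightly smaller than $x_0'$, so neither endpoint can be the maximizer. Then $f'(x_m)=0$, making $x_m$ a stationary point; but as an interior maximizer $f''(x_m)\leq 0$, contradicting the standing hypothesis that every stationary point satisfies $f''>0$. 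Hence the stationary point $x_0$ is unique.

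With uniqueness in hand, I would establish monotonicity on $(a,x_0)$ and $(x_0,b)$ separately. Because $f'$ is continuous and vanishes nowhere on either subinterval (the only zero is $x_0$), the intermediate value theorem forces $f'$ to have constant sign on each. Since $f''(x_0)>0$, $f'$ is increasing near $x_0$, so $f'<0$ just to the left of $x_0$ and $f'>0$ just to the right; combined with no sign change this yields $f'<0$ on $(a,x_0)$ and $f'>0$ on $(x_0,b)$. Thus $f$ is strictly decreasing on $(a,x_0)$ and strictly increasing on $(x_0,b)$, whence every sublevel set $\{x:f(x)\leq c\}$ is an interval, which is the definition of quasiconvexity.

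The main obstacle I anticipate is the care needed in the contradiction step, specifically ruling out that the maximum on $[x_0,x_0']$ is at an endpoint or that $f$ is locally constant. The strict-local-minimum consequence of $f''(x_0)>0$ handles the endpoint case, and local constancy is incompatible with $f''>0$ at the relevant point, so both pathologies are excluded. The remainder of the argument is a standard IVT/sign-analysis and is routine once uniqueness is secured.
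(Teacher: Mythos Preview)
Your argument is correct, but it proceeds along a different line than the paper's. The paper (following Boyd--Vandenberghe) dispenses with the contradiction step entirely: it observes directly that since $f''(x_0)>0$ at every zero of $f'$, the derivative $f'$ is strictly increasing whenever it crosses the value $0$, and therefore can cross $0$ at most once. This single observation simultaneously delivers uniqueness of $x_0$ and the sign pattern $f'<0$ on $(a,x_0)$, $f'>0$ on $(x_0,b)$, from which quasiconvexity follows. Your route instead separates the two conclusions: you first obtain uniqueness via the extreme value theorem on $[x_0,x_0']$ (two strict local minima would trap an interior maximum whose second derivative is $\le 0$, contradicting the hypothesis), and only then recover the sign structure of $f'$ by the intermediate value theorem. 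Both arguments are elementary and rigorous; the paper's is shorter and more direct, while yours trades the ``crossing'' insight for more familiar compactness and second-derivative-test machinery, which makes each step easy to check in isolation.
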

\begin{proof}
The proof is postponed to \ref{prop5}.
\end{proof}
According to Proposition \ref{prop_5}, $M(x)$ is quasiconvex and $x_0$ is unique. Then $M(x)$ is increasing in $(x_0,+\infty)$ and $M(x)<\underset{x\to +\infty}\lim M(x)$ for $x>x_0$. Combined with $\underset{x\to +\infty}\lim M(x)=\frac{-q_1}{1-q_0-q_1}$, one has $M(x)<\frac{-q_1}{1-q_0-q_1}\leq 0$ for $x>x_0$. Due to $\underset{x\to 0^+}\lim M(x)=\frac{q_0-q_1}{2(1-q_0-q_1)}>0$, $M(x_0)<0$ and $M(x)$ is decreasing in $(0,x_0)$, one concludes that there exists a point $x^*$ such that
\begin{align}\label{xstar}
M(x^*)=0,\quad  0<x^*<x_0,
\end{align}
and $M(x)>0$ in $(0,x^*)$ and $M(x)<0$ in $(x^*,x_0)$. Consequently, $G(x)$ is monotonically increasing in $(0,x^*)$ and monotonically decreasing in $(x^*,+\infty)$. $G(x)(x>0)$ is quasiconcave and achieves its maximum at the unique stationary point $x^*$.
\subsubsection{$\beta>2$}\label{beta_2}
In this setup, we prove that $G(x)$ is quasiconcave and has only one stationary point in $(0,+\infty)$.
For $\beta > 2$, revealing the quasiconcave property of $G(x)$ $(x > 0)$ is a little difficult than that of $1<\beta\leq 2$. Similar to Proposition \ref{prop_1}, we introduce a point $x_2$ as below.

\begin{proposition}\label{prop_6}
Let $x_2=\frac{1}{\alpha}\left(\frac{\beta-2}{2\beta}\right)^{\frac{1}{\beta}}$ and $\beta>2$, one has $M'(x_2)<0$.
\end{proposition}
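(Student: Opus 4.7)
The plan is to substitute $x_2$ directly into formula (\ref{M_derivative}) for $M'$ and reduce the claim $M'(x_2)<0$ to an algebraic inequality in $\beta$ alone. First I would evaluate $m_3$ at $x_2$: from the defining identity $\alpha^\beta x_2^\beta=(\beta-2)/(2\beta)$, substitution into (\ref{cdef}) gives
\begin{align*}
m_3(x_2)=\frac{1-\beta}{\beta-2}-\frac{1}{2}=-\frac{3\beta-4}{2(\beta-2)},
\end{align*}
which is strictly less than $-1$ for every $\beta>2$. This is the counterpart of the clean identity $m_3(x_1)=-1$ exploited in Proposition \ref{prop_4}, but now $m_3(x_2)$ overshoots $-1$ and therefore produces a sufficiently \emph{negative} contribution inside $M'(x_2)$ whenever the ratio $m_1/m_2$ is not too close to $1$.

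Next, setting $r=m_1(x_2)/m_2(x_2)\in(0,1)$ and substituting into (\ref{M_derivative}), the target $M'(x_2)<0$ is equivalent to $(1-r)(3\beta-4)/(2(\beta-2))>1$, i.e.\ $r<\beta/(3\beta-4)$. Using the identity $m_2^2(x_2)-m_1^2(x_2)=1/(4(1-q_0-q_1)^2)$ from (\ref{sdef}) together with the factorization $(3\beta-4)^2-\beta^2=8(\beta-1)(\beta-2)$, this rearranges to
\begin{align*}
m_1^2(x_2)(1-q_0-q_1)^2<\frac{\beta^2}{32(\beta-1)(\beta-2)}.
\end{align*}
Since $(1-q_0-q_1)^2\le 1$ throughout the admissible region, it is enough to prove the parameter-free bound $m_1(x_2)<\beta/\sqrt{32(\beta-1)(\beta-2)}$.

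Finally I would substitute the closed form of $m_1(x_2)$. Using $\alpha^\beta x_2^{\beta-1}=(\beta-2)/(2\beta x_2)$ one obtains $m_1(x_2)=f(x_2)x_2/(\beta-2)$, and combining with (\ref{noise_pdf}) yields
\begin{align*}
m_1(x_2)=\frac{\beta}{2\Gamma(1/\beta)(\beta-2)}\left(\frac{\beta-2}{2\beta}\right)^{1/\beta}e^{-(\beta-2)/(2\beta)}.
\end{align*}
The remaining step is a one-variable inequality in $\beta>2$, and this is where I expect the main difficulty. The bound is tight as $\beta\to 2^+$, since both sides of the required inequality on $m_1(x_2)$ blow up like $1/\sqrt{\beta-2}$, so a loose estimate of $\Gamma(1/\beta)$ cannot succeed. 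My approach would be to write $\Gamma(1/\beta)=\beta\,\Gamma(1+1/\beta)$ and exploit a sharp lower bound on $\Gamma(1+1/\beta)$ on $(0,1/2]$ together with $e^{-(\beta-2)/(2\beta)}\le 1$, reducing the claim to an elementary inequality in $\beta$ that can be verified by comparing asymptotics at $\beta\to 2^+$ and $\beta\to\infty$ and checking monotonicity of the ratio of the two sides in between.
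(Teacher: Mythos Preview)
Your proposal is correct and follows essentially the same route as the paper: both compute $m_3(x_2)=-\frac{3\beta-4}{2(\beta-2)}$, bound away the channel by taking $(1-q_0-q_1)^2\le 1$ (the paper phrases this as replacing $m_2$ by $\sqrt{1/4+m_1^2}$), and reduce $M'(x_2)<0$ to the identical one-variable inequality relating $\Gamma(1/\beta)$ to an elementary function of $\beta$. The only divergence is in how the final $\beta$-inequality is dispatched: the paper keeps the exponential factor, takes logs, shows the resulting function $Q(t)$ ($t=1/\beta$) is concave, and then compares its numerically computed maximum $Q(0.4609)\approx-0.605$ to the known minimum $2\ln\Gamma(1.461)\approx-0.243$; you instead drop $e^{-(\beta-2)/(2\beta)}\le 1$ and propose an asymptotics-plus-monotonicity argument, which is feasible but which you would still need to make explicit.
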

\begin{proof}
The proof is postponed to \ref{prop6}.
\end{proof}
Due to $\underset{x\to 0^+}{\lim} M'(x)>0$, $M'(x_2)<0$, $M'(x_1)>0$ and $0<x_2<x_1$, we conclude that there exist at least two stationary points $x_0$ and $x'_0$ such that
\begin{subequations}\label{x0_x0}
\begin{align}
&M'(x_0)=0,\quad  0<x_0<x_2,\\
&M'(x'_0)=0,\quad  x_2<x'_0<x_1.
\end{align}
\end{subequations}
Substituting $m_3(x_0)<m_3(x_2)=\frac{2-\frac{3}{2}\beta}{\beta-2}<m_3(x'_0)<-\frac{1}{2}$ into (\ref{M_2_x0}) yields
\begin{subequations}
\begin{align}
&M''(x_0)=\frac{f(x_0)}{x_0}\left[2-\beta+(2-\frac{3}{2}\beta)\frac{1}{m_3(x_0)}\right]\notag\\
&<\frac{f(x_0)}{x_0}\left[2-\beta+(2-\frac{3}{2}\beta)\frac{\beta-2}{2-\frac{3}{2}\beta}\right]=0,\\
&M''(x'_0)=\frac{f(x'_0)}{x'_0}\left[2-\beta+(2-\frac{3}{2}\beta)\frac{1}{m_3(x'_0)}\right]\notag\\
&>\frac{f(x'_0)}{x'_0}\left[2-\beta+(2-\frac{3}{2}\beta)\frac{\beta-2}{2-\frac{3}{2}\beta}\right]=0
\end{align}
\end{subequations}
From Proposition \ref{prop_4}, in $(0,x_2)$ $M'(x)$ is quasiconcave and the stationary point $x_0$ is unique; in $(x_2,+\infty)$ $M(x)$ is quasiconvex and the stationary point $x'_0$ is unique. In addition, one can conclude that $M'(x)<0$ in $(x_0,x'_0)$, and $M'(x)>0$ in the rest intervals. From $M'(x)>0$ in $(0,x_0)$, one has $M(x_0)>\underset{x\to 0^+}{\lim}~M(x)=\frac{q_0-q_1}{2(1-q_0-q_1)}>0$. From $M'(x)>0$ in $(x'_0,+\infty)$, one has $M(x'_0)<\underset{x\to +\infty}{\lim}~M(x)=\frac{-q_1}{1-q_0-q_1}\leq 0$ and $M(x)<0$ for $x>x'_0$. Combined with $M'(x)<0$ in $(x_0,x'_0)$, one concludes that there exists a unique point $x^*$ such that
\begin{align}\label{xstar_2}
M(x^*)=0,\quad  x_0<x^*<x'_0.
\end{align}
Therefore, $M(x)>0$ in $(0,x^*)$ and $M(x)<0$ in $(x^*,x_0)$, and $G(x)$ is monotonically increasing in $(0,x^*)$ and monotonically decreasing in $(x^*,+\infty)$. $G(x)(x>0)$ is quasiconcave and achieves its maximum at the unique stationary point $x^*$.
\subsection{BSC}\label{BSC}
In this subsection, the optimal threshold in BSC is studied, which follows the results derived in the binary asymmetric channel case. Provided that $q_0=q_1=q$, from (\ref{x_-x}) it can be derived that $G(x)=G(-x)$ for arbitrary $x$. Therefore, the optimal threshold must be zero or pairs of opposite numbers. Now we prove that the optimal threshold in BSC is zero for $0<\beta\leq2$ and a pair of opposite numbers for $\beta<2$.
\subsubsection{$0<\beta\leq2$}\label{bsc_02}
For $0<\beta\leq 1$, following the similar derivation in section \ref{beta_01}, one concludes that $M'(x)>0$ (\ref{Mder}), and $M(x)<0$ due to $\underset{x\to+\infty}{\lim} M(x)=\frac{-q}{1-2q}\leq 0$.
For $1<\beta\leq 2$, following the similar derivation in section \ref{beta_12}, one concludes that $M(x)$ has at least a stationary point $x_0\in(0,+\infty)$ satisfying $M'(x_0)=0$ and $M''(x_0)>0$. From Proposition \ref{prop_5}, $M(x)$ is quasiconvex and $x_0$ is unique. Then $M(x)$ is decreasing in $(0,x_0)$ and increasing in $(x_0,+\infty)$. Due to $\underset{x\to +\infty}\lim M(x)=\frac{-q}{1-2q}\leq 0$ and $\underset{x\to 0^+}\lim M(x)=0$ (while in the setting $q_0>q_1$, $\underset{x\to 0^+}\lim M(x)=\frac{q_0-q_1}{2(1-q_0-q_1)}>0$), one has $M(x)<0$ for arbitrary $x\in(0,+\infty)$. Therefore, for $0<\beta\leq 2$, one concludes that $G(x)$ is decreasing in $x\in(0,+\infty)$ due to the same signs of $G'(x)$ and $M(x)$. Because of $G(x)=G(-x)$, $G(x)$ attains its maximum at zero.
\subsubsection{$\beta>2$}\label{bsc_2}
For $\beta>2$, $G(x)(x>0)$ is quasiconcave and achieves its maximum at the unique stationary point $x^*$. The proof is similar to that in section \ref{beta_2} except that $\underset{x\to 0^+}{\lim}~M(x)=0$ and $\underset{x\to +\infty}{\lim}~M(x)=\frac{-q}{1-2q}\leq 0$.
\subsection{Optimal Threshold Calculation}\label{optimal_beta_1}
In this subsection, first an upper bound for the optimal value is given. Then combing both asymmetric and symmetric cases, we propose a numerical algorithm to efficiently calculate the optimal value $x^*$ (\ref{Gx}) for arbitrary binary channel.
\subsubsection{Upper Bound}\label{upper}
In the arbitrary binary channel setup, we prove that $1/\alpha$ is an upper bound for the optimal point $x^*$.

Under the binary asymmetric channel, for $0<\beta\leq 1$, the optimal threshold zero proved in section \ref{beta_01} meets the bound. For $1<\beta\leq 2$, it is proved that $M(x)$ is quasiconvex and $x_0$ is the unique stationary point of $M(x)$ in section \ref{beta_12}. Hence we have $x_1>x_0$ from $M'(x_1)>M'(x_0)=0$. From (\ref{xstar}) we know that $x^*<x_0$. Therefore, we have
\begin{align}
x^*<x_0<x_1.
\end{align}
For $\beta>2$, from (\ref{x0_x0}) and (\ref{xstar_2}) in section \ref{beta_2}, we have
\begin{align}\label{x01alpha}
x^*<x'_0<x_1.
\end{align}
Therefore, $x^*$ is upper bounded by $x_1$ for $\beta>1$. In addition, $x_1$ is an increasing function with respect to $\beta$ and attains its maximum at $x_1|_{\beta=+\infty}=1/\alpha$, which results in
\begin{align}\label{bound2}
x^*<1/\alpha.
\end{align}

Under the BSC, for $0<\beta\leq 2$, the optimal threshold zero proved in section \ref{bsc_02} meets the bound. For $\beta>2$, similarly to the binary asymmetric channel case (\ref{x01alpha}), one has $x^*<x_0<x_1<1/\alpha$. As a result, $1/\alpha$ is an upper bound for the optimal threshold for arbitrary binary channel.

\subsubsection{Numerical Algorithm}
From section \ref{BAC} and \ref{BSC}, the optimal threshold is zero for $0<\beta\leq 1$ under binary asymmetric channel and for $0<\beta\leq 2$ under BSC. In other settings, the optimal threshold is non-zero. Utilizing the upper bound $1/\alpha$, we provide a numerical algorithm for efficient calculation of the non-zero optimal threshold, as shown in Algorithm \ref{algorithm_only}. Because the inequality constrained minimization problem $\underset{0<x<1/\alpha} {\operatorname{min}}~-G(x)$ has a unique stationary point, and the first-order descent methods converge to a stationary point, a gradient descent algorithm is guaranteed to find the global optimum.
\begin{algorithm}
\protect\caption{Gradient Descent Algorithm}\label{algorithm_only}
\begin{enumerate}
\item Initialize $k=0$ and $x_{k}\in(0,1/\alpha)$.
\item Set $\triangle x_k=-G'(x_{k})$ (\ref{G_derivative}).
\item Choose a step size $t$ via backtracking linear search, satisfying $G(x_k+t\triangle x_k)\leq G(x)+0.4 t G'(x_k)\triangle x$ and $x_k+t\triangle x_k\in(0,1/\alpha)$.
\item Update $x_{k+1}=x_k+t\triangle x_k$.
\item Set $k=k+1$ and return to step 2 until the stopping criterion $|G'(x_k)|<10^{-5}\alpha^3$ is satisfied.
\end{enumerate}
\end{algorithm}
\section{Numerical Simulations}\label{simulation}
In section \ref{Q_design}, it is proven that $x^*=0$ for $0<\beta\leq 1$, and $G(x)(x>0)$ is quasiconcave for $\beta >1$ in the asymmetry binary channels $q_0>q_1$ and $1-q_0-q_1>0$. Utilizing the quasiconcavity, the numerical algorithm is conducted to obtain the maximum of $G(x)$, and the effectiveness of the corresponding optimal threshold is verified via numerical simulations.

For the first experiment, we use gradient descent algorithm to find the optimal threshold normalized by the scale parameter $\alpha^{-1}$ of the GGN. The results are presented in Fig. \ref{optimum_beta}.
\begin{figure}[h!t]
\centering
\includegraphics[width=80mm]{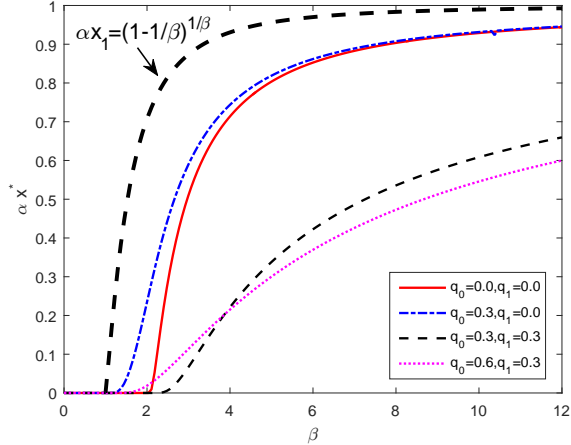}
\caption{The relationship between the normalized $\alpha x^*$ and $\beta$ under different flipping probabilities $(q_0,q_1)$.}
\label{optimum_beta}
\end{figure}
It shows that for $0<\beta\leq 1$, the optimal threshold is zero; for $1<\beta\leq 2$, the optimal threshold is zero under $q_0=q_1$, and non-zero under $q_0\not =q_1$; for $\beta> 2$, the optimal threshold is non-zero. In addition, for arbitrary flipping probabilities, the optimal threshold increases with $\beta$ and is upper bounded by $x_1=\frac{1}{\alpha}(1-\frac{1}{\beta})^{\frac{1}{\beta}}$.

For the second experiment, the effectiveness of quantizer thresholds design is verified. In TABLE \ref{x_star}, $x^*$ under different $\beta$ is calculated by Algorithm \ref{algorithm_only}. The corresponding optimal threshold $\tau^*$ is $-x^*$. Parameters are set as follows: $\alpha=1$, $\theta = 0.0661$, $q_0=0.7$, $q_1=0$, $N=2000$, $K = 1$, $h_{ij}=1,~\forall~ i,j$, the number of Monte Carlo trials is $2000$. The receiver operating characteristic (ROC) curves, i.e., the detection probability $P_{\rm D}$ versus the false alarm probability $P_{\rm FA}$, are presented in Fig. \ref{Pd_Pfa}. We have noticed that the ROCs of the Rao test are similar to those of the GLRT. To present the results clearly, we do not plot the ROCs of the Rao test in this experiment.
\begin{table}[h!t]
    \begin{center}
\caption{The values of $x^*$ under different $\beta$ with flipping probabilities $(q_0,q_1)=(0.7,0)$}.\label{x_star}
    \begin{tabular}{|l|c|c|c|c|}
           \hline{$\beta$ }  &   $1.5$ & $2$ & $4$ & $8$ \\
            \hline$x^*$  &  $0.1200$ & $0.3682$ & $0.7727$ &$0.9130$ \\ \hline
       \end{tabular}
    \end{center}
\end{table}
\begin{figure}[h!t]
\centering
\includegraphics[width=140mm]{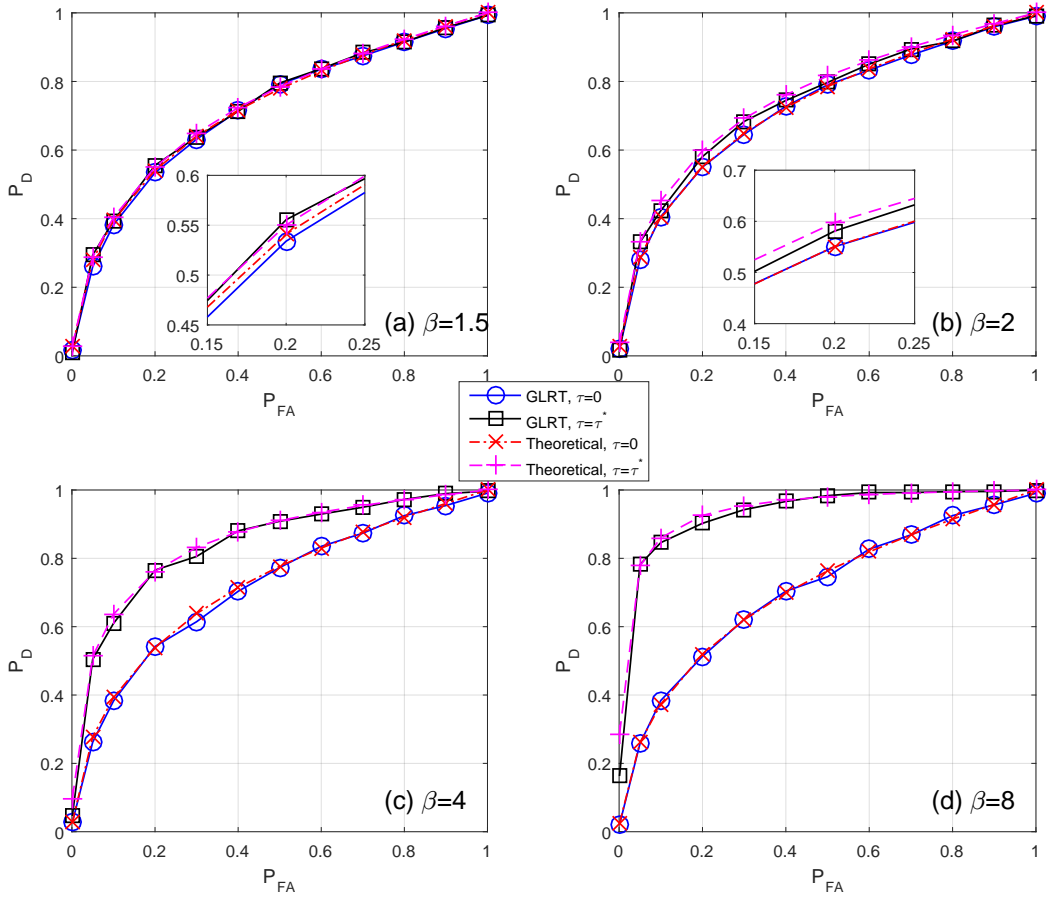}
\caption{The ROC curve under $\beta>1$. The flipping probabilities are $q_0=0.7$ and $q_1=0$.}
\label{Pd_Pfa}
\end{figure}

From TABLE \ref{x_star} and Fig. \ref{Pd_Pfa}, one obtains that under certain flipping probabilities $q_0\not=q_1$ and $\beta>1$, the performance of GLRT is improved by using the optimal threshold. When $\beta$ is small, the gain of the quantizer design with respect to the zero-threshold is negligible because the optimal threshold is still close to zero. As $\beta$ increases, the detection performance of the designed quantizer improves significantly compared to that utilizing the zero-threshold.


For the third experiment, we detect an one dimensional acoustic field under the ship transit noise \cite{Hodgkiss}. Let $h_{ij}=\sin(kx_i-\omega t_j)$ denote the unit response of the acoustic field at position $x_i$ and time instant $t_j$, $k$ is the wave number and $\omega$ is the angular frequency. In $25^{\circ}$C seawater (in which the sound speed is about $1500$ m/s), $50$ sensors are equispaced in $100$ m to test for the presence of a weak sound wave whose amplitude is $0.1$ Pa and frequency is $200$ Hz. For sensors, the sampling frequency is $5000$ Hz, and the sampling time is $0.1$ s. Accordingly, parameters are set as follows: $\theta=0.05$, $\alpha=1$, $q_0=0.3$, $q_1=0$, $N=50$, $K=50$, $x_i=2i$, $t_j=j/500$, $k=400\pi /1500\approx0.8378$, $\omega=400\pi\approx1257$, and $h_{ij}=\sin(1.676i-2.514j)$. In addition, the GGN with $\beta=2.779$ represents the ship transit noise \cite{Banerjee}. The number of Monte Carlo trials is $10^3$, and the ROC curves are presented in Fig. \ref{wave_vector}. It can be seen that the ROCs of the Rao test are almost the same as those of the GLRT. Compared to using the suboptimal zero-threshold, utilizing the optimal threshold improves the performances of the GLRT and Rao detectors.
\begin{figure}[h!t]
\centering
\includegraphics[width=80mm]{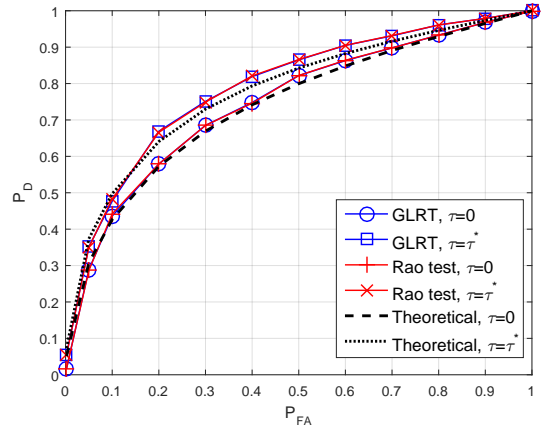}
\caption{The ROC curve for detecting the acoustic wave field under the ship transit noise environments.}
\label{wave_vector}
\end{figure}
\section{Conclusion}\label{con}
Provided that the noise obeys the generalized Gaussian distribution, it is shown that the optimal threshold depends on the value of shape $\beta$ critically. For $0<\beta\leq 1$, the optimal threshold is zero in both binary symmetric and asymmetric channels. For $1<\beta\leq 2$, the optimal threshold is zero in the BSC, while it is non-zero and unique in the binary asymmetric channel. For $\beta>2$, in the BSC, there exist two non-zero solutions which are opposite numbers corresponding to optimal thresholds, while in the binary asymmetric channel the optimal threshold is non-zero and unique. Next, for the cases of non-zero optimal thresholds, we prove that maximizing the non-central parameter can be solved efficiently via numerical algorithm. Finally, the effectiveness of the optimal threshold is verified in numerical experiments, and the gain of using the designed threshold becomes larger as the shape parameter $\beta$ increases.
\section{Acknowledgement}\label{ack}
This work is supported by the Zhejiang Provincial Natural Science Foundation of China under grant No. LQ18F010001 and the Fundamental Research Funds for the Central
Universities under Grant No. 2017QNA4042.
\begin{appendix}
\section{ Proof of Propositions}\label{appendix}
\subsection{Proposition 1}\label{prop1}
\begin{proof}
$\forall$ $\alpha>0$, $\beta>0$, $0\leq q_0 \leq1$ and $0\leq q_1 \leq1$, the equalities
\begin{align}\label{lemma_xab}
&G(x,q_0,q_1)=G(-x,q_1,q_0)=G(x,1-q_0,1-q_1)\notag\\
&=G(-x,1-q_1,1-q_0).
\end{align}
hold, due to $f(x)=f(-x)$ and $F(x)+F(-x)=1$. Let $(q_0=q_a, q_1=q_b)$ satisfy $q_0>q_1$ and $1-q_0-q_1>0$, and $x^*$ denote the value of $x$ which attains the maximum of $G(x,q_0,q_1)$. According to (\ref{lemma_xab}), $G(x^*,q_a,q_b)=G(x,q_b,q_a)|_{x=-x^*}=G(x^*,1-q_a,1-q_b)=G(x,1-q_b,1-q_a)|_{x=-x^*}\geq G(x,q_a,q_b)=G(-x,q_b,q_a)=G(x,1-q_a,1-q_b)=G(-x,1-q_b,1-q_a)$. The maximums of $G(x,q_b,q_a)$, $G(x,1-q_a,1-q_b)$ and $G(x,1-q_b,1-q_a)$ are obtained at $-x^*$, $x^*$ and $-x^*$, corresponding to the cases that $q_0<q_1$ \& $q_0+q_1<1$, $q_0<q_1$ \& $1-q_0-q_1<0$ and $q_0>q_1$ \& $1-q_0-q_1<0$, respectively. As a consequence, we conclude that the maximum of $G(x)$ in the case that $q_0<q_1$ or $1-q_0-q_1<0$ can be transformed into the case that $q_0>q_1$ and $1-q_0-q_1>0$.

Given that $q_0>q_1$ and $1-q_0-q_1>0$, for $x>0$, we have
\begin{align}\label{x_-x}
&G(x)-G(-x)=\frac{1}{\frac{1}{4}-\left[(1-q_0-q_1)F(x)-\frac{1}{2}+q_1\right]^2}\notag\\
&\times\frac{2f^2(x)[F(x)-\frac{1}{2}](1-q_0-q_1)(q_0-q_1)}{\frac{1}{4}-\left[(1-q_0-q_1)F(x)-\frac{1}{2}+q_0\right]^2}.
\end{align}
Utilizing $\frac{1}{2}<F(x)<1$, we have
\begin{subequations}
\begin{align}
\frac{q_1-q_0}{2}<(1-q_0-q_1)F(x)-\frac{1}{2}+q_1<\frac{1-2q_0}{2},\notag\\
\frac{q_0-q_1}{2}<(1-q_0-q_1)F(x)-\frac{1}{2}+q_0<\frac{1-2q_1}{2},\notag
\end{align}
\end{subequations}
which guarantee the inequalities
\begin{subequations}
\begin{align}
\left|(1-q_0-q_1)F(x)-\frac{1}{2}+q_1\right|<\frac{1}{2},\notag\\
\left|(1-q_0-q_1)F(x)-\frac{1}{2}+q_0\right|<\frac{1}{2}.\notag
\end{align}
\end{subequations}
Therefore, the denominators of both terms in (\ref{x_-x}) are positive and $G(x)-G(-x)>0$. Because $x = 0$ is also a feasible point of $G(x)$, the optimal point $x^*$ is either equal to zero or in the interval $(0,+\infty)$.
\end{proof}
\subsection{Proposition \ref{prop_4}}\label{prop4}
\begin{proof}
For $\beta>1$, from (\ref{cdef}) we have
\begin{align}\label{cx1_1}
m_3(x_1)=-1.
\end{align}
Substituting (\ref{cx1_1}) into (\ref{M_derivative}) yields
\begin{align}
M'(x_1)=\frac{f(x_1)m_1(x_1)}{m_2(x_1)}>0.
\end{align}
\end{proof}
\subsection{Proposition \ref{prop_5}}\label{prop5}
\begin{proof}
The proof refers to \cite[p. 101]{boyd}. From $f''(x_0)>0$ ($f''(x_0)<0$), we know that whenever the function $f'(x)$ crosses the value $0$, it is strictly increasing (decreasing). Therefore $f'(x)$ can cross the value $0$ at most once. It follows that $f'(x)<0$ for $a<x<x_0$ and $f'(x)>0$ for $x_0<x<b$ ($f'(x)<0$ for $a<x<x_0$ and $f'(x)>0$ for $x_0<x<b$). This shows that $f(x)$ is quasiconvex (quasiconcave) and the stationary point $f'(x_0)=0$ is unique.
\end{proof}
\subsection{Proposition \ref{prop_6}}\label{prop6}
\begin{proof}
For $\beta>2$, from (\ref{M_derivative}) and $m_3(x_2)=-(3/2 +1/(\beta-2)<-3/2$, we have
\begin{align}
\frac{M'(x_2)}{f(x_2)}\leq1+\left(1-\frac{m_1(x_2)}{\sqrt{\frac{1}{4}+m_1^2(x_2)}}\right)m_3(x_2),
\end{align}
in which the condition for equality is $q_0=q_1=0$ or $1$.
To prove that $M'(x_2)<0$ for arbitrary $(q_0,q_1)$ is equivalent to prove that
\begin{align}
1+\left(1-\frac{m_1(x_2)}{\sqrt{\frac{1}{4}+m_1^2(x_2)}}\right)m_3(x_2)<0,
\end{align}
which can be simplified as
\begin{align}\label{complex1}
\frac{1}{m_1^2(x_2)}>4\left[\left(\frac{1}{1+\frac{1}{m_3(x_2)}}\right)^2-1\right].
\end{align}
Substituting (\ref{cdef}) and (\ref{rdef}) into (\ref{complex1}) yields
\begin{align}\label{complex2}
\Gamma^2(1/\beta)>\left[e^{\frac{2-\beta}{2\beta}}\left(\frac{2\beta}{\beta-2}\right)^{\frac{\beta-1}{\beta}}\right]^2\frac{2(\beta-1)(\beta-2)}{\beta^2},
\end{align}
whose logarithm is
\begin{align}\label{complex3}
2\ln\Gamma(1/\beta)&>\frac{2}{\beta}-1+\left(3-\frac{2}{\beta}\right)\ln2-\frac{2}{\beta}\ln\beta\notag\\
&+\left(\frac{2}{\beta}-1\right)\ln(\beta-2)+\ln(\beta-1).
\end{align}
Let $t=1/\beta$, then $0<t<\frac{1}{2}$ due to $\beta>2$. Utilizing $\Gamma(x+1)=x\Gamma(x) (x>0)$, (\ref{complex3}) can be transformed as
\begin{align}
2\ln\Gamma(t+1)&>2\ln t+2t-1+\left(3-2t\right)\ln2+\notag\\
&(2t-1)\ln(1-2t)+\ln(1-t)\triangleq Q(t).
\end{align}
According to \cite{nature}, the minimum of $\Gamma(t+1)(0<t<1/2)$ is obtained at $t=0.461$. Now, we prove that $Q(t)<2\ln\Gamma(1.461)= -0.2430$ for $0<t<1/2$. The first and second order derivatives of $Q(t)$ are
\begin{subequations}\label{complex5}
\begin{align}
Q'(t)&=4-2\ln2+\frac{2}{t}+\frac{1}{t-1}+2\ln(1-2t),\\
Q''(t)&=-\frac{2}{t^2}-\frac{4}{1-2t}-\frac{1}{(1-t)^2}.
\end{align}
\end{subequations}
Given $0<t<1/2$, $Q''(t)<0$ and $Q(t)$ is concave.
We use the \emph{MATLAB} \emph{fminunc} function and obtain the maximum of $Q(t)$, achieved at $t=0.4609$ (very near the optimal point $t=0.461$ of $\Gamma(t+1)(0<t<1/2)$). Since $Q(t)\leq Q(0.4609)=-0.60542<-0.2430=2\ln\Gamma(1.461)\leq 2\ln\Gamma(t+1)$, the proposition is proved.
\end{proof}
\end{appendix}


\end{document}